\documentclass[onecolumn,draftcls,12pt]{IEEEtran}
\usepackage{amssymb}
\usepackage{graphicx}
\usepackage{amsmath}
\usepackage{amsfonts}
\usepackage{latexsym}
\usepackage{bm}
\usepackage{cite}
\usepackage{url}
\makeatletter
\def\hlinew#1{%
  \noalign{\ifnum0=`}\fi\hrule \@height #1 \futurelet
   \reserved@a\@xhline}
\ifCLASSINFOpdf
\else
\fi

\hyphenation{op-tical net-works semi-conduc-tor}

\begin{document}

\title{Energy Efficiency Scaling Law for MIMO Broadcasting Channels\thanks{This work is supported by National Basic Research Program of China (973 Program)
2007CB310602.}\thanks{This work has been submitted to the IEEE for possible publication.  Copyright may be transferred without notice, after which this version may no longer be accessible.}}

\author{Jie~Xu and Ling~Qiu
\thanks{The authors are with the Personal Communication Network \& Spread Spectrum Laboratory,
Department of Electrical Engineering and Information Science,
University of Science and Technology of China Hefei, Anhui, 230027,
China (email: suming@mail.ustc.edu.cn, lqiu@ustc.edu.cn).}
\thanks{Corresponding author: Ling Qiu, lqiu@ustc.edu.cn.}}

\markboth{IEEE Wireless Communication Letters}{Jie Xu \lowercase{\textit{et al}}.: Energy Efficiency Scaling Law for MIMO Broadcasting Channels}

\maketitle

\begin{abstract}
This letter investigates the energy efficiency (EE) scaling law for the broadcasting channels (BC) with many users, in which the non-ideal transmit independent power consumption is taken into account. We first consider the single antenna case with $K$ users, and derive that the EE scales as $\frac{{\log_2 \ln K}}{ \alpha}$
  when $\alpha > 0$  and  $\log_2 K$ when  $\alpha = 0$, where $\alpha$ is the normalized transmit independent power. After that, we extend it to the general MIMO BC case with a $M$-antenna transmitter and $K$ users each with $N$ antennas. The scaling law becomes
$\frac{{M \log_2 \ln NK}}{ \alpha}$  when   $\alpha > 0$ and $ \log_2 NK$  when  $\alpha = 0$.
\end{abstract}

\begin{IEEEkeywords}
Energy efficiency, MIMO broadcasting channels, scaling law.
\end{IEEEkeywords}

%
\IEEEpeerreviewmaketitle

\section{Introduction}
Energy efficiency (EE) is becoming increasingly important for the future green wireless communication systems \cite{YChenComMag}. As multiuser multiple input multiple output (MIMO) is the key technology for the next generation cellular networks, understanding the EE scaling behavior of the MIMO broadcasting channels (BC) is a critical issue to help the design of the green wireless networks. Therefore, we will investigate the scaling law of the EE for the MIMO BC in this letter.

The capacity scaling law of the MIMO BC is a well studied topic. It is known that the dirty paper coding (DPC) is the capacity achieving scheme \cite{DPC}. With DPC, the capacity scaling law of MIMO BC has been widely investigated in the literature, and the  $M \log_2 \ln NK$ scaling behavior is famous \cite{Scaling_TSDPC,Limits_mumimo}. Surprisingly, some sub-optimal schemes with low complexity precoding and user selection \cite{ZFDPC,ZFBFSUS,Scaling_RBF} can also achieve this scaling law, which makes them promising in the real systems. However, to the best of our knowledge, the EE scaling law has not been addressed yet, and thus is unknown.

EE is in general defined as the capacity divided by the total power consumption including the transmit-dependent and independent power, which represents the delivered bits per unit energy, measured in bits per Joule. To optimize the EE, adjusting the transmit power is the basis and the fractional programming is always utilized as the mathematical tool \cite{fractionalprogramming}. Based on this tool, an energy efficient iterative waterfilling scheme is shown to be optimal for the EE of the MIMO BC \cite{EEWaterfillingDPC}. The distinct feature of optimizing EE is that the sum transmit power should be adjusted according to the channel conditions and the transmit-independent power. Therefore,  compared with the derivation of the capacity scaling law which is based on fixed transmit power, this power adjusting feature would make the EE scaling law different.

The scaling law of the MIMO BC is first investigated in this letter, with the help of the Lambert $\omega$ function \cite{LambertWfunction}. The main contribution is as follows. We first derive that the EE scales as  $\frac{{\log_2 \ln K}}{ \alpha}$
when $\alpha > 0$  and as  $\log_2 K$ when  $\alpha = 0$ in the SISO scenario. After that, we extend it to the general multi-antenna case and the scaling law becomes $\frac{{M \log_2 \ln NK}}{ \alpha}$  when   $\alpha > 0$ and  $\log_2 NK$  when  $\alpha = 0$. The results give us insights about the effect of parameters, such as user number, transmit antenna number, on the EE,  and would help the design of the future green wireless networks.

The rest of this letter is organized as follows. Section \ref{sec2} introduces the system and power model. The scaling laws for the SISO and MIMO cases are given in section \ref{sec3} and \ref{sec4} respectively. Finally, section \ref{sec5} concludes this letter.

Regarding the notation, bold face letters refer to vectors (lower case) or matrices (upper case). The superscript $H$ and $T$ represent the
conjugate transpose and transpose operation, respectively. ${\rm tr}(\cdot)$ denotes the trace of the matrix,  and ${\mathbb E}(\cdot)$ denotes the expectations of random variables.

\section{System Model}\label{sec2}
The system consists of a base station (BS) with $M$ transmit antennas and $K$ users each with $N$ receive antennas. We consider the homogeneous scenario and assume that each user has the same large scale fading including pathloss and shadowing, which can be denoted as $\psi$. The smaller scale Rayleigh fading is considered and that from the BS to the $k$th user ${\bf H}_k \in {\mathbb{C}}^{N \times M}$ is  a zero-mean Gaussian random matrix, with each entry independent and identically distributed (i.i.d.) ${\mathcal{CN}}(0,1)$. Assume the transmitted signal at the BS is ${\bf{s}}\in {\mathbb{C}}^{M \times 1}$, and then the received signal of user $k$ can be denoted as
\begin{equation} \label{sys1}
\begin{array}{l}
{\bf y}_k = \psi{\bf H}_k {\bf{s}} + {\bf{n}},
\end{array}
\end{equation}
where ${\bf n}$ is the noise at the user. Denote the transmit power as $P = {\rm tr}({\bf{s}}{\bf{s}}^H)$. Based on the uplink-downlink duality \cite{Duality}, the sum capacity of the MIMO BC can be denoted as
\begin{equation} \label{sys2}
\begin{array}{l}
\displaystyle C(P) = \\ \displaystyle \max \limits_{{\bf{Q}}_k \ge 0, \sum \limits_{k=1}^{K}{\rm tr}({\bf Q}_k)\le P} W \log_2 \left| {{\bf{I}} + \frac{\psi}{N_0W}\sum\limits_{k = 1}^K {{\bf{H}}_k^H{{\bf{Q}}_k}{{\bf{H}}_k}} }\right|,
\end{array}
\end{equation}
where ${\bf{Q}}_k$ is the transmit covariance of the users in the dual uplink, $W$ is the bandwidth and $N_0$ is the density of the noise power.

As the BS takes the main parts of the total power consumption in the cellular networks \cite{Arnold}, we only consider the power consumption at the BS in this letter. Based on the realistic BS model \cite{Arnold} and our previous work \cite{EEWaterfillingDPC,Xu1}, the power model is denoted as
\begin{equation} \label{sys3}
\begin{array}{l}
\displaystyle P_{\rm{total}}(P) = \frac{P}{\eta} + M P_{\rm{dyn}} + P_{\rm{sta}},
\end{array}
\end{equation}
where $\eta$ denotes the power amplifier (PA) efficiency; $M P_{\rm{dyn}}$ denotes the dynamic power consumption proportional to the number of
radio frequency (RF) chains, e.g. circuit power of RF chains which is always proportional to $M$; and $P_{\rm{sta}}$ accounts for the static
power independent of both $M$ and $P$ which includes power consumption of the baseband processing, battery unit etc.. $M P_{\rm{dyn}} + P_{\rm{sta}}$ is referred to as the transmit-independent power.

Therefore, the maximum EE of the MIMO BC can be defined as
\begin{equation} \label{sys4}
\begin{array}{l}
\displaystyle \Gamma = \max \limits_{P>0} \frac{C(P)}{P_{\rm{total}}(P)}.
\end{array}
\end{equation}
To simplify the notation, we define that $\frac{\psi}{N_0W}P = Q$, $\alpha = {\frac{{{N_0}W}\eta}{\psi }(M{P_{{\rm{dyn}}}} + {P_{{\rm{sta}}}}})$, and then
\begin{equation} \label{sys5}
\begin{array}{l}
\displaystyle \Gamma =  \frac{{W^2{N_0}\eta}}{\psi } \xi,
\end{array}
\end{equation}
where
\begin{equation} \label{sys6}
\begin{array}{l}
\displaystyle \xi=\mathop {\max }\limits_{Q \ge 0} \frac{{\mathop {\max }\limits_{{{\bf{Q}}_k} \ge 0,\sum\limits_{k = 1}^K {{\rm{tr}}} ({{\bf{Q}}_k}) \le Q} \log_2 \left| {{\bf{I}} + \sum\limits_{k = 1}^K {{\bf{H}}_k^H{{\bf{Q}}_k}{{\bf{H}}_k}} } \right|}}{{ {Q + \alpha } }}.
\end{array}
\end{equation}
As our purpose is to derive the scaling law of the EE of the MIMO BC, and $\alpha$, $\frac{{W^2{N_0}\eta}}{\psi }$ can be viewed as constant here, we would utilize $\xi$ as the normalized EE metric in this letter. And $\alpha$ is viewed as the normalized transmit-independent power.

Note that the solution of (\ref{sys6}) can be obtained based on the energy efficient iterative waterfilling in \cite{EEWaterfillingDPC}, however, it is difficult to derive the scaling law directly from \cite{EEWaterfillingDPC}. Before discussing the general MIMO scenario, let us look at the SISO case at first.

\section{Scaling Law for the SISO Case}\label{sec3}

For the SISO case, only transmitting to the user with the largest channel gain is the optimal solution \cite{Limits_mumimo}, and then the EE can be denoted as
\begin{equation} \label{siso1}
\begin{array}{l}
\displaystyle \xi=\mathop {\max }\limits_{Q \ge 0} \frac{\log_2 (1 + Q \max \limits_{1 \le i \le K} |H_i|^2)}{{ {Q + \alpha } }}.
\end{array}
\end{equation}
Fortunately, with the help of the Lambert $\omega$ function, we can obtain the close-form solution for (\ref{siso1}). We will give the following definition and lemma at first.

\newtheorem{Definition}{Definition}
\begin{Definition}[Lambert $\omega$ function \cite{LambertWfunction}]\label{Definition1}
The Lambert $\omega$ function is defined as the inverse function of \[f(X) = Xe^X,\]
where $X$ is any complex number.
\end{Definition}

\newtheorem{Lemma}{Lemma}
\begin{Lemma}\label{Lemma0}
For the optimization problem
\begin{equation} \label{siso2}
\begin{array}{l}
\displaystyle \mathop {\max }\limits_{x \ge 0} \frac{\log_2 (1 + \gamma x)}{{\left( {x + \alpha } \right)}},
\end{array}
\end{equation}
if we denote that $y = {\log_2 (1 + \gamma x)}$, the optimal solution is given as
\begin{equation} \label{siso3}
\begin{array}{l}
\displaystyle  y^* = \frac{1}{{\ln 2}}\left[ {\omega \left( {\frac{{\alpha \gamma  - 1}}{e}} \right) + 1} \right],
\end{array}
\end{equation}
and the corresponding $x^*$ is
\begin{equation} \label{siso4}
\begin{array}{l}
\displaystyle  x^* = \frac{{\left( {\alpha \gamma  - 1} \right){\omega ^{ - 1}}\left( {\frac{{\alpha \gamma  - 1}}{e}} \right) - 1}}{\gamma }.
\end{array}
\end{equation}
\end{Lemma}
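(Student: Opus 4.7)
The plan is to apply the first-order optimality condition to the scalar objective in (\ref{siso2}) and then recognize the resulting transcendental equation as an instance of $u e^u = v$, which the Lambert $\omega$ function inverts by definition.

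First, I would argue that a maximizer exists in the interior of $[0,\infty)$. The objective $f(x) = \log_2(1+\gamma x)/(x+\alpha)$ vanishes at $x=0$, is continuous and positive on $(0,\infty)$, and tends to $0$ as $x \to \infty$ because $\log_2(1+\gamma x) = o(x)$. Hence $f$ attains its maximum at some $x^{*} > 0$ satisfying $f'(x^{*}) = 0$. Differentiating and clearing denominators yields the stationarity condition
\[
\frac{\gamma(x+\alpha)}{1+\gamma x} = \ln(1+\gamma x).
\]

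Next, I would introduce the substitution $u = \ln(1+\gamma x) - 1$, so that $1+\gamma x = e^{u+1}$ and $\gamma x = e^{u+1} - 1$. Substituting these expressions into the stationarity condition and simplifying collapses it into
\[
u\, e^{u} = \frac{\alpha\gamma - 1}{e}.
\]
By Definition \ref{Definition1} this gives $u = \omega\bigl((\alpha\gamma-1)/e\bigr)$, so $\ln(1+\gamma x^{*}) = \omega\bigl((\alpha\gamma-1)/e\bigr) + 1$; dividing through by $\ln 2$ produces the formula for $y^{*}$ in (\ref{siso3}). To obtain (\ref{siso4}), I would solve $1 + \gamma x^{*} = e \cdot e^{u}$ and apply the identity $e^{\omega(v)} = v/\omega(v)$ (immediate from $\omega(v)e^{\omega(v)} = v$) with $v = (\alpha\gamma - 1)/e$; interpreting $\omega^{-1}$ as the reciprocal $1/\omega$, one line of algebra then recovers the stated $x^{*}$.

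Finally, I would verify that this stationary point is indeed the global maximum. Since $x \ge 0$ forces $u \ge -1$, and the principal branch of $\omega$ is the unique inverse of $u \mapsto u e^u$ on $[-1,\infty)$, the critical point is unique on $(0,\infty)$; combined with the boundary behavior already noted, it must be the maximizer. The main obstacle is not analytic but algebraic: choosing the substitution so that the transcendental stationarity equation falls precisely into the canonical form $u e^u = v$ needed to invoke $\omega$. Once that substitution is in place, the remainder of the argument is a short application of standard Lambert-function identities.
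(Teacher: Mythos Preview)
Your argument is correct and follows essentially the same route as the paper: apply the first-order condition and recognize the resulting equation as an instance of $u e^{u}=v$ solvable via the Lambert $\omega$ function, then recover $x^{*}$ from the identity $e^{\omega(v)}=v/\omega(v)$. The only cosmetic differences are that the paper substitutes $y=\log_2(1+\gamma x)$ first and differentiates the reciprocal, whereas you differentiate in $x$ directly and then substitute $u=\ln(1+\gamma x)-1$; and for global optimality the paper invokes quasi-concavity from a reference, while your boundary-behavior plus uniqueness-of-critical-point argument is more self-contained.
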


\begin{proof}
Based on \cite{frac_program}, the optimization of (\ref{siso2}) is a quasi-concave optimization, and  any local optimal point is globally optimal. We can denote $y = {\log_2 (1 + \gamma x)}$, and then the optimal $y^*$ is
\begin{equation} \label{siso5}\begin{array}{l}
\displaystyle y^* = \arg \mathop {\max }\limits_{y \ge 0} \frac{y}{{ {\frac{2^y-1}{\gamma} + \alpha } }}\\
=\displaystyle \arg \mathop {\min }\limits_{y \ge 0} \frac{{ {{2^y-1} + \alpha\gamma } }}{\gamma y}.
\end{array}\end{equation}
Calculating the first order derivative, we have that
\begin{equation} \label{siso6}\begin{array}{l}
\displaystyle \left( {\ln 2\cdot y^* - 1} \right) {2^{y^*}} = \alpha \gamma - 1 .
\end{array}\end{equation}
Thus, (\ref{siso3}) can be obtained. Correspondingly, we have that
\[
\begin{array}{l}
\displaystyle  x^* = \frac{{2^{y^*}-1}}{\gamma }\\ \displaystyle  = \frac{\exp\left\{\left[ {\omega \left( {\frac{{\alpha \gamma  - 1}}{e}} \right) + 1} \right]\right\} - 1}{\gamma}.
\end{array}
\]
According to the property of the Lambert $\omega$ function, $\omega(X)e^{\omega(X)} = X$, and then $e^{\omega(X)} = X\omega^{-1}(X)$. Therefore, taking $X = \frac{{\alpha \gamma  - 1}}{e}$, and then (\ref{siso4}) can be obtained.
\end{proof}

Based on Lemma \ref{Lemma0} and treating $Q$, $\max \limits_{1 \le i \le K} |H_i|^2$ as $x$, $\gamma$ respectively, the EE of (\ref{siso1}) can be denoted as
\begin{equation} \label{siso7}
\begin{array}{l}
\displaystyle \xi=\frac{{\frac{1}{{\ln 2}}\left[ {\omega \left( {\frac{{\alpha \mathop {\max }\limits_{1 \le i \le K} |{H_i}{|^2} - 1}}{e}} \right) + 1} \right]}}{{{Q^*} + \alpha }},
\end{array}
\end{equation}
where \begin{equation} \label{siso8}
\begin{array}{l} \displaystyle Q^* = \frac{{\left( {\alpha \max \limits_{1 \le i \le K} |H_i|^2  - 1} \right){\omega ^{ - 1}}\left( {\frac{{\alpha \max \limits_{1 \le i \le K} |H_i|^2  - 1}}{e}} \right) - 1}}{\max \limits_{1 \le i \le K} |H_i|^2 }.
\end{array}
\end{equation}
We have the following lemma.
\begin{Lemma}\label{Lemma1}
When $K \to \infty$, we have that $Q^* \to 0$.
\end{Lemma}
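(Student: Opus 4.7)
The plan is to reduce the claim to the asymptotic behavior of $M_K := \max_{1 \le i \le K}|H_i|^2$ and of the Lambert $\omega$ function for large arguments. First I would rewrite $Q^*$ in a more transparent form using the property $\omega^{-1}(X)=1/\omega(X)$ (the convention used in Lemma~\ref{Lemma0}), obtaining
\begin{equation*}
Q^* \;=\; \frac{\alpha M_K - 1}{M_K\,\omega\!\left(\frac{\alpha M_K - 1}{e}\right)} \;-\; \frac{1}{M_K}.
\end{equation*}
This split isolates two terms whose limits can be analyzed independently.

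Next I would recall that the $|H_i|^2$ are i.i.d.\ exponential$(1)$ random variables, so classical extreme-value theory gives $M_K/\ln K \to 1$ in probability (and almost surely in the standard sense used for these scaling-law statements); in particular $M_K \to \infty$ as $K\to\infty$. Plugging this into the first term, the numerator $\alpha M_K - 1$ is asymptotically $\alpha M_K$, so the fraction $(\alpha M_K-1)/M_K$ tends to the finite constant $\alpha$, while the denominator $\omega((\alpha M_K-1)/e)\to\infty$ because $\omega(z)\to\infty$ as $z\to\infty$ (one can use the standard estimate $\omega(z)\sim \ln z - \ln\ln z$). Hence the first term vanishes. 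The second term $1/M_K$ vanishes trivially.

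Combining these two observations yields $Q^*\to 0$, which is exactly the claim of Lemma~\ref{Lemma1}. The argument is almost entirely a calculation once the decomposition is in place, so I do not expect any serious obstacle; the only point that requires a little care is the mode of convergence for $M_K\to\infty$, which should be stated consistently with the rest of the paper's scaling-law framework (convergence in probability suffices, since $Q^*$ depends continuously on $M_K$ and the same continuous-mapping argument will then be invoked in the subsequent scaling-law theorem).
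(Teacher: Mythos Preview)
Your proposal is correct and follows essentially the same route as the paper: both argue that $M_K=\max_i|H_i|^2\to\infty$, hence $\omega\!\left(\frac{\alpha M_K-1}{e}\right)\to\infty$, and then read off $Q^*\to 0$ from the explicit formula. Your decomposition into the two summands and your remarks on extreme-value asymptotics and the mode of convergence simply make explicit what the paper's very terse proof leaves implicit, but there is no substantive difference in strategy.
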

\begin{proof}
When $K \to \infty$, $\max \limits_{1 \le i \le K} |H_i|^2 \to \infty$, and then ${\omega }\left( {\frac{{\alpha \max \limits_{1 \le i \le K} |H_i|^2  - 1}}{e}} \right) \to \infty$. Look at (\ref{siso8}), we can obtain that $Q^* \to 0$ easily.
\end{proof}

We can try to obtain the scaling law as the following Theorem based on Lemma \ref{Lemma1} and (\ref{siso7}).

\newtheorem{Theorem}{Theorem}
\begin{Theorem}\label{Theorem1}
When $K \to \infty$, we have that
\begin{equation} \label{eq11}
\begin{array}{l}
\displaystyle \lim \limits_{K\to \infty} \frac{{\mathbb E}(\xi)}{\frac{ \log_2 \ln K }{\alpha}} =1
\end{array}
\end{equation}
when $\alpha > 0$ and
\begin{equation} \label{eq110}
\begin{array}{l}
\displaystyle \lim \limits_{K\to \infty} \frac{{\mathbb E}(\xi)}{\log_2 K } =1
\end{array}
\end{equation}
when $\alpha = 0$.
\end{Theorem}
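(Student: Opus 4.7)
The plan is to analyze the closed form (\ref{siso7}) by treating numerator and denominator separately and then handling the two regimes. Throughout, the key random object is $g := \max_{1\le i\le K}|H_i|^2$, the maximum of $K$ i.i.d.\ $\mathrm{Exp}(1)$ variables, for which I will use: $\mathbb{E}(g) = \sum_{k=1}^{K} 1/k = \ln K + \gamma + o(1)$; the uniform upper tail $\Pr(g > \ln K + s) \le e^{-s}$ (union bound); and the lower tail $\Pr(g < \ln K/2) \to 0$. Together these imply $g/\ln K \to 1$ in probability, and a short computation using the tail bounds gives $\mathbb{E}(\ln g) = \ln\ln K + o(\ln\ln K)$.

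For $\alpha > 0$, I will invoke the standard asymptotic expansion of the Lambert function, $\omega(x) = \ln x - \ln\ln x + o(1)$ as $x \to \infty$. Applied at the argument $(\alpha g - 1)/e$, which tends to infinity almost surely, this yields
\[
\omega\!\left(\tfrac{\alpha g - 1}{e}\right) = \ln g + \ln \alpha - 1 + o(\ln\ln K),
\]
so the numerator $y^{*}$ of (\ref{siso7}) satisfies $y^{*} = \log_{2} g + O(1)$. A similar substitution in (\ref{siso8}) shows $Q^{*} \sim \alpha/\ln g \to 0$, which strengthens Lemma \ref{Lemma1} and yields $\xi = y^{*}/(Q^{*}+\alpha) = y^{*}/\alpha + O(1)$. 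Taking expectations and using $\mathbb{E}(\ln g) \sim \ln\ln K$ then gives $\mathbb{E}(\xi)/(\log_{2}\ln K/\alpha) \to 1$, which is (\ref{eq11}).

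For $\alpha = 0$ the inner maximization in (\ref{siso1}) reduces to $\sup_{Q > 0} \log_{2}(1+Qg)/Q$. Setting the derivative to zero is equivalent to $\ln u = 1 - 1/u$ with $u = 1+Qg$, and this holds only at $u = 1$ (since $\ln u > 1 - 1/u$ for all $u > 1$); hence the objective is strictly decreasing on $(0,\infty)$ and its supremum, by L'Hopital, is $g/\ln 2$ attained in the limit $Q \to 0^{+}$. Thus $\xi = g/\ln 2$ identically, and $\mathbb{E}(\xi)/\log_{2} K = \bigl(\sum_{k=1}^{K} 1/k\bigr)/\ln K \to 1$, which is (\ref{eq110}).

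The main obstacle is upgrading the pointwise asymptotics of (\ref{siso7}) to the claimed expectation statement in (\ref{eq11}). The upper bound follows from $\xi \le y^{*}/\alpha$ together with the tail-based estimate $\mathbb{E}(\ln g) \le \ln\ln K + O(1/\ln K)$. The lower bound requires ruling out an anomalously small $g$ — where $Q^{*}$ is not yet small — which I would handle by splitting on the event $\{g \ge \ln K/2\}$; its complement has probability $o(1)$ and contributes at most $o(\log_{2}\ln K)$ to $\mathbb{E}(\xi)$ by the same tail estimates. No analogous issue arises for $\alpha=0$, where $\xi$ is given in closed form.
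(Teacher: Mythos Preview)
Your proposal is correct and follows the same overall route as the paper: start from the closed form (\ref{siso7})--(\ref{siso8}), show $Q^{*}\to 0$, and reduce the $\alpha>0$ case to the growth of $\omega$ at arguments of order $\ln K$; for $\alpha=0$ both you and the paper evaluate the supremum at $Q\to 0^{+}$ to obtain $\xi=g/\ln 2$ and then use $\mathbb{E}(g)\sim\ln K$. The only notable differences are (i) you read off the needed asymptotic directly from the standard expansion $\omega(x)=\ln x-\ln\ln x+o(1)$, whereas the paper establishes $\lim_{K\to\infty}\omega\!\left((\alpha\ln K-1)/e\right)/\ln\ln K=1$ via L'Hospital's rule together with the derivative identity $\omega'(X)=\omega(X)/[X(\omega(X)+1)]$, which is of course equivalent; and (ii) your passage from the pointwise asymptotic to the expectation statement---tail bounds on $g$ and splitting on $\{g\ge\ln K/2\}$---is more explicit than the paper's, which handles that step heuristically by appeal to the concavity argument of \cite{Scaling_TSDPC}.
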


\begin{proof}
We will look at the $\alpha > 0$ case at first.

According to Lemma \ref{Lemma1}, we have that
\begin{equation} \label{siso9}
\begin{array}{l}
\displaystyle \xi \approx \frac{{\frac{1}{{\ln 2}}\left[ {\omega \left( {\frac{{\alpha \mathop {\max }\limits_{1 \le i \le K} |{H_i}{|^2} - 1}}{e}} \right) + 1} \right]}}{{ \alpha }}
\end{array}
\end{equation}
when $K \to \infty$. And then  motivated by \cite{Scaling_TSDPC}, as $|H_i|^2$'s have $\chi^2(2)$ distribution, we have that $\beta = {\max \limits_{1 \le i \le K} |H_i|^2} \sim \ln K$. As $\xi$ is a concave function of $\gamma$, following the similar idea in \cite{Scaling_TSDPC}, we can have that
\begin{equation} \label{siso10}
\begin{array}{l}
\displaystyle {\mathbb E}(\xi) \sim \frac{{\frac{1}{{\ln 2}}\left[ {\omega \left( {\frac{{\alpha \ln K - 1}}{e}} \right) + 1} \right]}}{{ \alpha }}
\end{array}
\end{equation}
And then, we have that
\begin{equation} \label{siso11}
\begin{array}{l}
\displaystyle \mathop {\lim }\limits_{K \to \infty } \frac{{\frac{1}{{\ln 2}}\left[ {\omega \left( {\frac{{\alpha \ln K - 1}}{e}} \right) + 1} \right]}}{{\log_2 \ln K}} \\
 \displaystyle \approx \mathop {\lim }\limits_{K \to \infty } \frac{{\omega \left( {\frac{{\alpha \ln K - 1}}{e}} \right)}}{{\ln \ln K}} \\
\displaystyle \mathop  = \limits^{\left( a \right)} \mathop {\lim }\limits_{K \to \infty } \frac{{\frac{{\omega \left( {\frac{{\alpha \ln K - 1}}{e}} \right)}}{{\frac{{\alpha \ln K - 1}}{e}\left[ {\omega \left( {\frac{{\alpha \ln K - 1}}{e}} \right) + 1} \right]}} \cdot \frac{\alpha }{e}\frac{1}{K}}}{{\frac{1}{{K \cdot \ln K}}}} \\
\displaystyle \mathop  =  1,
\end{array}
\end{equation}
where $(a)$ is following the L'Hospital's rule and based on the following property of the first order derivative of the Lambert $\omega$ function \cite{LambertWfunction}.
\[\omega '\left( X \right) = \frac{{\omega \left( X \right)}}{{X\left( {\omega \left( X \right) + 1} \right)}}\]
Therefore, we have that
\begin{equation} \label{siso12}
\begin{array}{l}
\displaystyle {\mathbb E}(\xi) \sim \frac{{{\log_2 \ln K}}}{{ \alpha }},
\end{array}
\end{equation}
and the first part of Theorem \ref{Theorem1} is proved.

When $\alpha = 0$, the maximization of (\ref{siso1}) is achieved when $Q=0$. Based on (\ref{siso1}), we can have that
\begin{equation} \label{siso13}
\begin{array}{l}
\displaystyle {\mathbb{E}}\{\xi\}  = {\mathbb{E}}\left\{\frac{{\mathop {\max }\limits_{1 \le i \le K} |{H_i}{|^2}}}{{\ln 2}}\right\} \sim \frac{{\ln K}}{{\ln 2}} \sim \log_2 K.
\end{array}
\end{equation}
Thus, Theorem \ref{Theorem1} is proved.
\end{proof}

\newtheorem{Remark}{Remark}
Theorem \ref{Theorem1} points out that the scaling law of the EE is affected by $\alpha$ significantly. When $\alpha > 0$, the scaling behavior is similar with the capacity. When $\alpha = 0$, the scaling law becomes different.


\section{Scaling Law for the MIMO Case}\label{sec4}

We will turn to the EE scaling law for the general MIMO BC in this section. As only iterative solution of optimizing the EE of the MIMO BC is available \cite{EEWaterfillingDPC} and obtaining the closed-form expression  is difficult, we would utilize upper and lower bounds to employ Lemma \ref{Lemma0} to  derive the EE scaling law.

The key result in this section is the following theorem.
\begin{Theorem}\label{Theorem2}
When $K \to \infty$, we have that
\begin{equation} \label{eq11}
\begin{array}{l}
\displaystyle \lim \limits_{K\to \infty} \frac{{\mathbb E}(\xi)}{\frac{M \log_2 \ln NK }{\alpha}} =1
\end{array}
\end{equation}
when $\alpha > 0$ and
\begin{equation} \label{eq110}
\begin{array}{l}
\displaystyle \lim \limits_{K\to \infty} \frac{{\mathbb E}(\xi)}{{\log_2 NK }} =1
\end{array}
\end{equation}
when $\alpha = 0$.
\end{Theorem}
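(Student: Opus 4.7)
My plan is to adapt the proof of Theorem \ref{Theorem1} by sandwiching $C(Q)$ between matched upper and lower bounds, each of which has the scalar form $c\cdot\log_2(1+\tilde\gamma Q)$ so that Lemma \ref{Lemma0} applies almost verbatim, and then showing that the effective gain $\tilde\gamma$ scales like $\ln NK$.

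For the upper bound I would use the concavity of $\log\det$ (Hadamard plus AM/GM) to write
\[
\log_2\Bigl|{\bf I}+\sum\nolimits_k{\bf H}_k^H{\bf Q}_k{\bf H}_k\Bigr|\le M\log_2\Bigl(1+\tfrac{1}{M}\,\mathrm{tr}\Bigl(\sum\nolimits_k{\bf H}_k^H{\bf Q}_k{\bf H}_k\Bigr)\Bigr),
\]
and then apply $\mathrm{tr}({\bf H}_k^H{\bf Q}_k{\bf H}_k)\le\lambda_{\max}({\bf H}_k{\bf H}_k^H)\,\mathrm{tr}({\bf Q}_k)$ together with $\sum_k\mathrm{tr}({\bf Q}_k)\le Q$ to obtain
$C(Q)\le M\log_2(1+Q\lambda^*/M)$, where $\lambda^*:=\max_{1\le k\le K}\lambda_{\max}({\bf H}_k{\bf H}_k^H)$. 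Applying Lemma \ref{Lemma0} with $\gamma\leftarrow\lambda^*/M$ and pulling out the overall factor $M$ gives an EE upper bound $\frac{M}{\alpha\ln 2}\bigl[\omega\bigl(\tfrac{\alpha\lambda^*/M-1}{e}\bigr)+1\bigr]$ at a power level with $Q^*\to 0$ (by the MIMO analogue of Lemma \ref{Lemma1}, which is immediate because $\lambda^*\to\infty$). The exponential upper tail of $\lambda_{\max}({\bf H}_k{\bf H}_k^H)$ for fixed $M,N$ yields $\lambda^*\sim\ln K\sim\ln NK$, and then the same L'Hospital / $\omega'$ manipulation used in \eqref{siso11} delivers $\mathbb{E}[\xi]\lesssim M\log_2\ln NK/\alpha$.

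For the lower bound I would exhibit a concrete achievable scheme with the correct multiplexing gain $M$, the most natural choice being random orthonormal beamforming with per-beam user selection as in \cite{Scaling_RBF}: with per-beam power $Q/M$, the best of the $NK$ effective SINRs per beam concentrates around $\Theta(\ln NK)$, so $C(Q)\ge M\log_2(1+Q\gamma/M)$ with $\mathbb{E}[\gamma]\sim\ln NK$. Lemma \ref{Lemma0} applied beam-by-beam then produces an EE lower bound of the same $M\log_2\ln NK/\alpha$ form. Crucially, the EE-optimal operating point satisfies $Q^*\gamma/M\to\alpha/e$, so $Q^*\to 0$ and any residual interference from the RBF approximation vanishes asymptotically and cannot disturb the doubly-logarithmic leading term. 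Alternatively, one can simply invoke the known $M\log_2\ln NK$ capacity-scaling result \cite{Scaling_TSDPC,Limits_mumimo} evaluated at this chosen $Q^*$. The $\alpha=0$ case mirrors the SISO argument: the supremum is approached as $Q\to 0^+$, linearization gives $C(Q)/Q\to\lambda^*/\ln 2$, and both bounds yield $\mathbb{E}[\xi]\sim\log_2 NK$.

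The main obstacle I anticipate is the lower bound: the RBF SINR carries a residual interference term in its denominator, and one must certify that it is negligible at the vanishing EE-optimal power and that the extreme-value behavior of $\lambda_{\max}({\bf H}_k{\bf H}_k^H)$ is tight enough to pin down the constant in front of $\log_2\ln NK$ to exactly $1$ rather than some unrelated $c>0$. Both issues should dissolve because $Q^*\to 0$ effectively linearizes the rate around the origin and only the doubly-logarithmic leading term matters, but writing the squeeze argument cleanly without invoking additional ad hoc concentration bounds is the delicate part.
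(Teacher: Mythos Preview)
Your proposal follows essentially the same sandwich-plus-Lemma~\ref{Lemma0} architecture as the paper, and the upper-bound argument is nearly identical (the paper bounds $\mathrm{tr}({\bf H}_k^H{\bf Q}_k{\bf H}_k)$ by the maximum \emph{row norm} $\max_j g_j^k g_j^{kH}$ rather than $\lambda_{\max}({\bf H}_k{\bf H}_k^H)$, but both are i.i.d.\ over the $NK$ row/user indices with exponential tails, so both deliver the $\ln NK$ growth). The genuine difference is the lower bound: the paper uses ZFDPC with greedy scheduling (treating each receive antenna as a separate single-antenna user), which yields the clean interference-free form $M\log_2(1+\tfrac{Q}{M}d_{M,M}^2)$ with $d_{M,M}^2\sim\ln NK$, whereas you opt for random beamforming. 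The ZFDPC route buys exactly what you flagged as the delicate part---there is no residual interference term to control at $Q^*\to 0$, so the squeeze goes through without any additional concentration or linearization argument. Your RBF route would also work (and your reasoning that $Q^*\to 0$ kills the interference is sound), but ZFDPC is the cleaner instrument here.
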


\begin{proof}
Based on  \cite{Scaling_TSDPC}, we will give the upper bound at first. For any transmit covariances, we have that
\begin{equation} \label{simu1}\begin{array}{l}
\mathop {\max }\limits_{{{\bf{Q}}_k} \ge 0,\sum\limits_{k = 1}^K {{\rm{tr}}} ({{\bf{Q}}_k}) \le Q} \log_2 \left| {{\bf{I}} + \sum\limits_{i = 1}^K {{\bf{H}}_i^H{{\bf{Q}}_i}{{\bf{H}}_i}} } \right|\\
\displaystyle \mathop
 \le \limits^{\left( b \right)}\mathop {\max }\limits_{{{\bf{Q}}_k} \ge 0,\sum\limits_{k = 1}^K {{\rm{tr}}} ({{\bf{Q}}_k}) \le Q} M\log_2 \left( {1 + \frac{{{\rm{tr}}\left( {\sum\limits_{i = 1}^K {{\bf{H}}_i^H{{\bf{Q}}_i}{{\bf{H}}_i}} } \right)}}{M}} \right) \\
\displaystyle \mathop  \le \limits^{\left( c \right)}\mathop {\max }\limits_{{{\bf{Q}}_k} \ge 0,\sum\limits_{k = 1}^K {{\rm{tr}}} ({{\bf{Q}}_k}) \le Q} M\log_2 \left( {1 + \frac{{\sum\limits_{i = 1}^K {\mathop {\max }\limits_{1 \le j \le N} g_j^ig_j^{iH}{\rm{tr}}\left( {{{\bf Q}_i}} \right)} }}{M}} \right) \\
\displaystyle  = M\log_2 \left( {1 + \frac{Q}{M}\mathop {\max }\limits_{1 \le i \le K} \mathop {\max }\limits_{1 \le j \le N} g_j^ig_j^{iH}} \right)
 \end{array}
 \end{equation}
where $g_j^i$ is the $j$th row of ${\bf H}_i$, (b) follows $\det \left( {\bf{A}} \right) \le {\left( {\frac{{{\rm{tr}}\left( {\bf{A}} \right)}}{M}} \right)^M}$,  (c) follows
${\rm{tr}}\left( { {{\bf{H}}_i^H{{\bf{Q}}_i}{{\bf{H}}_i}} } \right) \le \mathop {\max }\limits_{1 \le j \le N} g_j^ig_j^{iH}{\rm{tr}}\left( {{{\bf Q}_i}} \right)$, and $g_j^ig_j^{iH}$'s are i.i.d random variables with $\chi^2(2M)$ distribution. Thus, the upper bound of $\xi$ can be denoted as
\begin{equation} \label{mimo1}
\begin{array}{l}
\displaystyle \xi^{\rm upp} =  \mathop {\max }\limits_{Q \ge 0} \frac{M\log_2 \left( {1 + \frac{Q}{M}\mathop {\max }\limits_{1 \le i \le K} \mathop {\max }\limits_{1 \le j \le N} g_j^ig_j^{iH}} \right)}{{\left( {Q + \alpha } \right)}}.
\end{array}
\end{equation}
About the lower bound, we will utilize the ZFDPC with greedy scheduling. Based on \cite{ZFDPC}, treating each antenna as a single antenna user\footnote{This is a scheme with the lower bound performance, as the performance can be further improved through cooperation among the antennas within a user.} and assuming equal power allocation, the capacity lower bound can be denoted as
\begin{equation} \label{mimo2}\begin{array}{l}
\mathop {\max }\limits_{{{\bf{Q}}_k} \ge 0,\sum\limits_{k = 1}^K {{\rm{tr}}} ({{\bf{Q}}_k}) \le Q} \log_2 \left| {{\bf{I}} + \sum\limits_{i = 1}^K {{\bf{H}}_i^H{{\bf{Q}}_i}{{\bf{H}}_i}} } \right|\\
\ge \sum\limits_{i = 1}^M {\log_2 \left( {1 + \frac{P}{M}d_{k,k}^2} \right)}  \\
  \ge M\log_2 \left( {1 + \frac{P}{M}d_{M,M}^2} \right), \\
 \end{array}\end{equation}
where $d_{ii}^2$ is the equivalent channel gain of each selected user and the distribution of $d_{kk}^2$ follows $\max \limits_{1 \le i \le NK-k+1} \chi^2\left(2(M-k+1)\right)$. Thus, the lower bound of the EE can be denoted as
 \begin{equation} \label{mimo3}
\begin{array}{l}
\displaystyle \xi^{\rm low} = \mathop {\max }\limits_{Q \ge 0} \frac{M\log_2 \left( {1 + \frac{P}{M}d_{M,M}^2} \right)}{{\left( {Q + \alpha } \right)}}.
\end{array}
\end{equation}

Based on  (\ref{mimo1}),  (\ref{mimo3}), Lemma \ref{Lemma0}, and following the similar procedure in Theorem \ref{Theorem1}, when $\alpha > 0$, we can have the following upper bound and lower bound
\begin{equation} \label{mimo4}
\begin{array}{l}
\displaystyle \xi^{\rm upp} \approx \frac{{\frac{M}{{\ln 2}}\left[ {\omega \left( {\frac{{\alpha \mathop {\max }\limits_{1 \le i \le K} \mathop {\max }\limits_{1 \le j \le N} g_j^ig_j^{iH} - M}}{eM}} \right) + 1} \right]}}{{ \alpha }} \end{array}
\end{equation}
\begin{equation} \label{mimo5}
\begin{array}{l}
\displaystyle \xi^{\rm low} \approx \frac{{\frac{M}{{\ln 2}}\left[ {\omega \left( {\frac{{\alpha d_{M,M}^2 - M}}{eM}} \right) + 1} \right]}}{{ \alpha }} \end{array}
\end{equation}
As $g_j^ig_j^{iH}$'s are i.i.d random variables with $\chi^2(2M)$ distribution, we have that $\mathop {\max }\limits_{1 \le i \le K} \mathop {\max }\limits_{1 \le j \le N} g_j^ig_j^{iH} \sim \ln NK $. Meanwhile, we have $d_{M,M}^2 \sim \ln NK $ as $d_{kk}^2$ follows $\max \limits_{1 \le i \le NK-k+1} \chi^2\left(2(M-k+1)\right)$. Therefore, we have that
\begin{equation} \label{mimo6}
\begin{array}{l}
\displaystyle {\mathbb E}\{\xi\} \le {\mathbb E} \left\{ \frac{{\frac{M}{{\ln 2}}\left[ {\omega \left( {\frac{{\alpha \mathop {\max }\limits_{1 \le i \le K} \mathop {\max }\limits_{1 \le j \le N} g_j^ig_j^{iH} - M}}{eM}} \right) + 1} \right]}}{{ \alpha }} \right\} \\
\displaystyle \sim \frac{{\frac{M}{{\ln 2}}\left[ {\omega \left( {\frac{{\alpha \ln NK - M}}{eM}} \right) + 1} \right]}}{{ \alpha }} \sim \frac{M \log_2 \ln NK }{\alpha}
\end{array}
\end{equation}
\begin{equation} \label{mimo7}
\begin{array}{l}
\displaystyle {\mathbb E}\{\xi\} \ge {\mathbb E} \left\{\frac{{\frac{M}{{\ln 2}}\left[ {\omega \left( {\frac{{\alpha d_{M,M}^2 - M}}{eM}} \right) + 1} \right]}}{{ \alpha }}\right\} \\
\displaystyle \sim \frac{{\frac{M}{{\ln 2}}\left[ {\omega \left( {\frac{{\alpha \ln NK - M}}{eM}} \right) + 1} \right]}}{{ \alpha }} \sim \frac{M \log_2 \ln NK }{\alpha}
\end{array}
\end{equation}
When $\alpha = 0$, following the same procedure in ({\ref{siso13}}), we can have
\begin{equation} \label{mimo8}
\begin{array}{l}
\displaystyle {\mathbb{E}}\{\xi\}  \le {\mathbb{E}}\left\{\frac{\mathop {\max }\limits_{1 \le i \le K} \mathop {\max }\limits_{1 \le j \le N} g_j^ig_j^{iH}}{{\ln 2}}\right\} \sim \frac{{\ln NK}}{{\ln 2}} \sim \log_2 NK.
\end{array}
\end{equation}
and
\begin{equation} \label{mimo9}
\begin{array}{l}
\displaystyle {\mathbb{E}}\{\xi\}  \ge {\mathbb{E}}\left\{\frac{d_{M,M}^2}{{\ln 2}}\right\} \sim \frac{{\ln NK}}{{\ln 2}} \sim \log_2 NK.
\end{array}
\end{equation}
(\ref{eq110}) is proved. Therefore, Theorem \ref{Theorem2} is verified.
\end{proof}

\begin{Remark}
Theorem \ref{Theorem2} gives us insights about the EE of the MIMO BC case. When $\alpha > 0$, it is similar with the SISO case, where the transmit-independent power dominates the denominator. However, things change when $\alpha = 0$. Although the EE scales as $\log_2 NK$, the multiplexing gain is unavailable. That is, varying the transmit antenna number in this case cannot change the EE.

Furthermore, let us look at the effect of the transmit antenna number $M$ when $\alpha > 0$. As $\alpha = {\frac{{{N_0}W}\eta}{\psi }(M{P_{{\rm{dyn}}}} + {P_{{\rm{sta}}}}})$, we can have that
\[{\mathbb E}(\xi ) \sim \frac{\psi }{{{N_0}W\eta }} \cdot \frac{{M\log_2 \ln NK}}{{(M{P_{{\rm{dyn}}}} + {P_{{\rm{sta}}}})}}\]
Therefore, we can conclude that when $K \to \infty$, utilizing more antennas always benefits from the standpoint of both EE and capacity{\footnote{It is worthwhile to note that it does not hold when the user number is limited, in which case there exists a tradeoff between the capacity gain and the increasing power consumption, e.g. see \cite{Xu1}.}}.
\end{Remark}

\section{Conclusions}\label{sec5}

We analyze the EE scaling law for the MIMO BC with many users in this letter. We employ the closed-form solution in the SISO case to derive the EE scaling law at first. After that, we obtain the scaling law  for the general MIMO BC based on the upper and low bound, and borrowing the results of the SISO case. It is shown that the EE scaling law is affected by the transmit-independent power $\alpha$ significantly. When $\alpha > 0$, a scaling law ${\frac{M \log_2 \ln NK }{\alpha}}$ can be acquired, while when $\alpha = 0$, the EE scales  as $\log_2 NK $.

\bibliographystyle{IEEEtran}
\bibliography{reference}

\end{document}